\newtheorem{theorem}{Theorem}[section]
\newtheorem{lemma}[theorem]{Lemma}
\newtheorem{definition}[theorem]{Definition}
\newcommand{\R}{{\mathbb R}}
\newcommand{\ie}{{\em i.e.},\xspace }
\newcommand{\etal}{{\em et al.}\ }
\begin{document}

\title{Analytic Connectivity in General Hypergraphs}  
\author[]{Ashwin Guha}
\author[]{Muni Sreenivas Pydi}
\author[]{Biswajit Paria}
\author[]{Ambedkar Dukkipati}
\address{Department of Computer Science and Automation \\ Indian Institute of
Science \\ Bangalore 560012, India.}
\email{ad@csa.iisc.ernet.in}

\begin{abstract}
In this paper we extend the known results of analytic connectivity 
to non-uniform hypergraphs. We prove a modified Cheeger's inequality and also give a bound on analytic connectivity 
with respect to the degree sequence and diameter of a hypergraph.  
\end{abstract}

\maketitle

\section{Introduction}
\label{intro}
Combinatorial graph theory is a vast active area of research with a wide variety of applications. It is well-known that many interesting problems have no polynomial-time algorithms. In an attempt to find approximate solutions to these problems, matrices pertaining  to the graph such as adjacency matrix or Laplacian matrix were studied, and the eigenvalues of these matrices were used to bound various graph parameters. This gave rise to spectral graph theory, which has since become a separate area of research in its own right.

The idea of graphs has been generalized to hypergraphs where an edge may span more than two vertices. Hypergraphs have also been 
explored in detail \cite{berge1973graphs}. However, spectral methods for hypergraphs have received less attention. Recently there 
has been a renewed interest in spectral hypergraph theory. 

 The traditional approach towards dealing with hypergraphs has been to convert it into a graph and apply the known graph results. 
 Another approach is to define a tensor for a hypergraph that is an analogue of the corresponding matrix for a graph. A novel 
 definition of eigenvalue for a tensor given by Qi \cite{qi2005eigenvalues} and independently by Lim \cite{lim2006} has led to surge of activity in this area. Subsequently various results on spectra of hypergraphs  have been provided in \cite{qi2014hvaluesoflaplacian, cooper2012spectra} . The results in these works are applicable to uniform hypergraphs where the size of every edge is fixed. There have been recent attempts to extend these results for non-uniform hypergraphs \cite{banerjee2016spectra, bu2016spectral}.

The concept of analytic connectivity was introduced in \cite{qi2014hvaluesoflaplacian}, which is the analogue of algebraic connectivity introduced by Fiedler \cite{fiedler1973algebraic} for graphs. In \cite{li2016analytic}, the relations between analytic connectivity and other graph parameters are explored. In particular, Cheeger's inequality for uniform hypergraphs is proved.

In this paper we use the terminology given by Banerjee \etal \cite{banerjee2016spectra} and extend the results by Li \etal \cite{li2016analytic} to non-uniform hypergraphs. We prove a modified Cheeger's inequality and also give the relation between analytic connectivity and diameter.

The paper is organized as follows. In Section \ref{prelims} we introduce our notation and state the preliminaries. We state the equivalent theorems for graphs and uniform hypergraphs. In Section \ref{mainres}, we provide the main results with proof. Section \ref{summary} contains concluding remarks.

\section{Preliminaries}
\label{prelims}


Let $H(V,E)$ be a general hypergraph on vertex set $V$ and edge set $E$ which is a set of non-empty subsets of $V$. 
Let $n$ be number of vertices. We denote  by $m$ and $s_{\text{min}} $ the size of the largest edge and 
the smallest edge respectively, \ie $m= \max \{|e|: e \in E \} $ and $s_{\text{min}}= \min \{|e|: e \in E \} $. Let 
$\Delta $ be the largest degree among vertices of $H$.

For a hypergraph $H$ we adopt the definition given in \cite{banerjee2016spectra} and define the adjacency tensor 
$\mathcal{A} $ to be an $n$-dimensional hypermatrix 
\footnote{In this paper we use the terms hypermatrix and tensor interchangeably for ease of understanding.} 
of order $m$ such that if $e = \{v_{i_1} , v_{i_2} ,\ldots, v_{i_s} \} \in E$ is an 
edge of cardinality $s \leq m$, then 
\[  a_{ p_1p_2 \ldots p_m} =  \frac{s}{\Omega},\text{ where } \Omega = \displaystyle\sum_{k_j \geq 1} \frac{m!}{k_1 !k_2 !\ldots k_s!}\] 
with $\sum k_j =m $  and $p_1 , p_2 , \ldots, p_m$ are chosen in all possible ways from $\{i_1 ,i_2 ,\ldots ,i_s \}$ such that 
each element of the set appears at least once. The other entries of the hypermatrix are zero. Note that this definition agrees with the 
definition of adjacency tensor for uniform hypergraphs given in \cite{cooper2012spectra}.

For an edge $e = \{v_{i_1} , v_{i_2} ,\ldots, v_{i_s} \} \in E$ we define
$$x^e_m = \sum x_{r_1} x_{r_2} \ldots x_{r_m},$$  
where the sum is over $ r_1, r_2, \ldots,r_m$  chosen in all possible ways from $\{ i_1 ,i_2 ,\ldots ,i_s\}$ 
with each element of the set appearing at least once in the index .

Let $\mathcal{D}$ be the degree tensor which is a diagonal tensor of order $m$ and dimension $n$ such that $d_{ii\ldots i}= d(v_i)$ and zero 
elsewhere. We define Laplacian tensor $\mathcal{L}(H)$ to be $\mathcal{D} - \mathcal{A}$. 

For a tensor $\mathcal{T}$ of order $m$ and dimension $n$ and $x \in \R^n$ we define $\mathcal{T}x^m$ as 
\[(\mathcal{T}x^m)_j = \sum_{i_2,\ldots, i_m =1 }^{n} t_{ji_2 i_3 \ldots i_m}x_{i_2} x_{i_3}\ldots x_{i_m}.  \]

In particular we have
\begin{align*}
\mathcal{L}x^m &= (\mathcal{D} - \mathcal{A})x^m =  \sum_{i=1}^{n} d(v_i)x_i^m - \sum_{i_2, \ldots, i_m =1}^{n}a_{i i_2 \ldots i_m} x_{i_2}\ldots x_{i_m} \\
& = \sum_{e=\{v_{i_1, \ldots, v_{i_s}} \} \in E} \left( \sum_{j=1}^{s} x_{i_j}^m  - \frac{s}{\Omega} x^e_m \right)
 = \sum_{e \in E}\mathcal{L}(e)x^m
\end{align*}

where $\mathcal{L}(e)x^m = \displaystyle\sum\limits_{j=1}^{s} x_{i_j}^m  - \frac{s}{\Omega} x^e_m $.

It can be shown that for $x \in \R^n_+$, $\mathcal{L}(e)x^m \geq 0 $ \cite{banerjee2016spectra}. We recall the definitions of a few hypergraph parameters which are intuitive generalizations of those for a 
$2$-graph. 

\begin{definition}
For a hypergraph $H$, the isoperimetric number $i(H)$ is defined as 
$$i(H) = \min \left\{ \frac{ |\partial S|}{|S|} : S \subset V, 0 \leq |S| \leq \frac{|V|}{2} \right\}, $$
\end{definition}
\noindent where $\partial S$ is the boundary of $S$ which consists of the edges in $H$ with vertices in both $S$ and $\overline{S} = V \backslash S$.  
When $H$ is a $2$-graph, the edges in $ \partial S$ have exactly one vertex in $S$ and one vertex in $\overline{S}$.

\begin{definition}
For a hypergraph $H$, the diameter $\text{diam}(H)$ is defined as the maximum distance between any pair of vertices. 
$$\text{diam}(H) = \max \{l(u,v): u,v \in V \}, $$
\end{definition}
\noindent where $l(u,v)$ is the length of the shortest path connecting $u$ and $v$.

The concept of algebraic connectivity, defined as the second smallest eigenvalue of the Laplacian (denoted by $\lambda_2 $) was introduced by 
Fiedler \cite{fiedler1973algebraic}. The algebraic connectivity has proved to be a reliable measure to understand the structure of a $2$-graph. 
Some of the results which give bounds on $\lambda_2 $ with respect to various graph constants are given below, including a version of the famous 
Cheeger inequality.

\begin{lemma}
\label{Fiedler} 
For a graph $G$ with $n$ vertices, 
$$ \lambda_2(G) = 2n \min \left\{ \frac{\sum_{(v_i,v_j) \in E }^{} (x_i - x_j)^2}
{\sum_{i=1}^{n} \sum_{j=1}^{n} (x_i - x_j)^2} : x \neq c. \textbf{1}_n, \text{ for all }c \in \R \right\}.$$
\end{lemma}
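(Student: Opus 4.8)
The plan is to reduce the stated variational formula to the classical Rayleigh--Ritz characterization of the second smallest eigenvalue of the Laplacian matrix $L = D - A$ of $G$. Recall that $L$ is symmetric and positive semidefinite, that its smallest eigenvalue is $\lambda_1 = 0$ with eigenvector $\mathbf{1}_n$, and hence that the Courant--Fischer theorem yields
\[
\lambda_2(G) = \min\left\{ \frac{x^{\top} L x}{x^{\top} x} : x \neq 0,\ x \perp \mathbf{1}_n \right\}.
\]
The first step is to record the standard quadratic-form identity $x^{\top} L x = \sum_{(v_i,v_j) \in E} (x_i - x_j)^2$, which follows by writing $x^{\top} L x = \sum_i d(v_i) x_i^2 - 2\sum_{(v_i,v_j)\in E} x_i x_j$ and regrouping the degree terms edge by edge.

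Next I would compute the denominator appearing in the statement. Expanding the square gives
\[
\sum_{i=1}^n \sum_{j=1}^n (x_i - x_j)^2 = 2n \sum_{i=1}^n x_i^2 - 2 \Bigl( \sum_{i=1}^n x_i \Bigr)^2 = 2n\, x^{\top} x - 2 (\mathbf{1}_n^{\top} x)^2.
\]
In particular, on the subspace $x \perp \mathbf{1}_n$ this reduces to $2n\, x^{\top} x$, so for every nonzero $x$ orthogonal to $\mathbf{1}_n$ we obtain
\[
2n \cdot \frac{\sum_{(v_i,v_j)\in E}(x_i - x_j)^2}{\sum_{i,j}(x_i-x_j)^2} = \frac{x^{\top} L x}{x^{\top} x}.
\]
Minimizing both sides over such $x$ and invoking the characterization above would identify the right-hand minimum with $\lambda_2(G)$.

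The remaining and most delicate point is that the minimization in the statement is taken over all $x$ that are not scalar multiples of $\mathbf{1}_n$, rather than over the smaller set $\{x \perp \mathbf{1}_n\}$, so I must argue the two minima agree. The key observation is that both the numerator $\sum_{(v_i,v_j)\in E}(x_i-x_j)^2$ and the denominator $\sum_{i,j}(x_i-x_j)^2$ are invariant under the translation $x \mapsto x + c\,\mathbf{1}_n$, since every difference $x_i - x_j$ is unchanged; hence the quotient is constant along each coset $x + \R\,\mathbf{1}_n$. Given any admissible $x \neq c\,\mathbf{1}_n$, translating by $c = -\tfrac{1}{n}\sum_i x_i$ produces a representative $x'$ with $\mathbf{1}_n^{\top} x' = 0$ and $x' \neq 0$ (the latter precisely because $x$ is not a multiple of $\mathbf{1}_n$), leaving the quotient unchanged. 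Thus the infimum over $\{x \neq c\,\mathbf{1}_n\}$ equals the infimum over nonzero $x \perp \mathbf{1}_n$, and combining this with the displayed equality completes the proof. I expect no serious obstacle beyond this reduction; the only care needed is to confirm that the denominator vanishes exactly when $x$ is constant, which is what makes the quotient well defined precisely on the stated domain.
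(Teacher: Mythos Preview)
Your proof is correct: the reduction via the quadratic-form identity for $L$, the computation $\sum_{i,j}(x_i-x_j)^2 = 2n\,x^\top x - 2(\mathbf{1}_n^\top x)^2$, and the translation-invariance argument that collapses the minimization over non-constant vectors to the orthogonal complement of $\mathbf{1}_n$ are all sound and standard.

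There is, however, nothing to compare against. The paper does not supply a proof of this lemma; it is stated in the preliminaries section as a known result attributed to Fiedler and is then invoked as a black box inside the proof of Theorem~\ref{diamforhypergraph}. So your write-up is not an alternative to the paper's argument but rather fills in a gap the authors deliberately left to the literature.
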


\begin{theorem}
\label{diamforgraph}
For a graph $G$ with $n$ vertices, 
$$ \lambda_2(G)\geq \frac{4}{n \cdot \text{diam}(G)} .$$
\end{theorem}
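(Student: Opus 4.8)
The plan is to apply the variational characterisation of Lemma \ref{Fiedler} and to bound the ratio appearing there from below for every admissible $x$. Both the numerator $\sum_{(v_i,v_j)\in E}(x_i-x_j)^2$ and the denominator $\sum_{i,j}(x_i-x_j)^2$ are invariant under the shift $x \mapsto x + c\,\mathbf{1}_n$, so without loss of generality I may assume $\sum_i x_i = 0$; this changes neither the set over which the minimum is taken nor its value. It then suffices to prove that for every such $x \neq 0$,
\begin{equation*}
\sum_{i=1}^n\sum_{j=1}^n (x_i-x_j)^2 \;\le\; \frac{n^2\,\text{diam}(G)}{2}\sum_{(v_i,v_j)\in E}(x_i-x_j)^2,
\end{equation*}
since substituting this into Lemma \ref{Fiedler} gives $\lambda_2(G) \ge 2n \cdot \frac{2}{n^2\,\text{diam}(G)} = \frac{4}{n\,\text{diam}(G)}$.

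The first ingredient is a telescoping-plus-Cauchy--Schwarz estimate. For each pair $\{v_i,v_j\}$ I would fix a shortest path $P_{ij}$, whose length $l(v_i,v_j)$ is at most $\text{diam}(G)$. Writing $x_i - x_j$ as the telescoping sum of the edge differences along $P_{ij}$ and applying the Cauchy--Schwarz inequality yields $(x_i-x_j)^2 \le l(v_i,v_j)\sum_{(v_k,v_l)\in P_{ij}}(x_k-x_l)^2 \le \text{diam}(G)\sum_{(v_k,v_l)\in P_{ij}}(x_k-x_l)^2$. Summing over all unordered pairs $\{v_i,v_j\}$ and interchanging the order of summation, each edge $e=(v_k,v_l)$ contributes $(x_k-x_l)^2$ weighted by the number $M_e$ of pairs whose chosen shortest path traverses $e$, so that $\sum_{i<j}(x_i-x_j)^2 \le \text{diam}(G)\sum_{e=(v_k,v_l)\in E}M_e\,(x_k-x_l)^2$.

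The crux, and the step I expect to be the main obstacle, is the purely combinatorial bound $M_e \le n^2/4$ for every edge $e=(u,w)$. To prove it I would partition the vertices according to which endpoint of $e$ is strictly closer: set $A=\{v: l(v,u)<l(v,w)\}$ and $B=\{v: l(v,w)<l(v,u)\}$, which are disjoint, so $|A|+|B|\le n$. The key claim is that if a shortest path reaches $u$ before $w$ on its way from $v_i$ to $v_j$, then $v_i \in A$ and $v_j \in B$: optimality of the subpaths gives $l(v_i,v_j)=l(v_i,u)+1+l(w,v_j)$, and combining this with the triangle inequalities $l(v_i,v_j)\le l(v_i,w)+l(w,v_j)$ and $l(v_i,v_j)\le l(v_i,u)+l(u,v_j)$ forces $l(v_i,u)<l(v_i,w)$ and $l(w,v_j)<l(u,v_j)$. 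Hence every pair counted by $M_e$ has one endpoint in $A$ and one in $B$, so $M_e \le |A|\,|B| \le \big(\tfrac{|A|+|B|}{2}\big)^2 \le n^2/4$ by the AM--GM inequality. Feeding $M_e \le n^2/4$ into the previous estimate and using $\sum_{i,j}(x_i-x_j)^2 = 2\sum_{i<j}(x_i-x_j)^2$ yields the displayed inequality and completes the proof. The delicate point is precisely the distance claim, which guarantees the $A$/$B$ separation regardless of how the shortest paths were selected; once it is in hand, the remaining bookkeeping is routine.
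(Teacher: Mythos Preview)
The paper does not give its own proof of Theorem~\ref{diamforgraph}: it is listed among the preliminary results quoted from the literature and is later invoked as a black box in the proof of Theorem~\ref{diamforhypergraph}. There is therefore nothing in the paper to compare your argument against.

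That said, your proposed proof is correct. The telescoping-plus-Cauchy--Schwarz step along shortest paths is standard, and the combinatorial bound $M_e \le n^2/4$ via the partition $A=\{v:l(v,u)<l(v,w)\}$, $B=\{v:l(v,w)<l(v,u)\}$ is carried out cleanly: subpath optimality gives $l(v_i,v_j)=l(v_i,u)+1+l(w,v_j)$, and the two triangle inequalities then force the endpoints of any counted pair into $A$ and $B$ respectively, after which $|A|\,|B|\le n^2/4$ follows from $|A|+|B|\le n$. A minor remark: the normalisation $\sum_i x_i=0$ is never actually used, since both sides of your key inequality depend only on the differences $x_i-x_j$; the argument goes through for every $x\neq c\,\mathbf{1}_n$ without it. You are also implicitly assuming $G$ is connected so that shortest paths exist between all pairs, which is harmless since $\text{diam}(G)$ would otherwise be infinite and the bound vacuous.
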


\begin{theorem}
Let $G$ be a 2-graph with more than one edge and $(d(v_1), \ldots, d(v_n))$ be the degree sequence of $G$. Then
\[  \lambda_2(G) \leq \min_{\{v_i ,v_j \} \in E} \left\{\frac{d(v_i) + d(v_j)-2}{2}\right\}. \]
\end{theorem}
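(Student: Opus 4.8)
The plan is to use the variational characterization of $\lambda_2$ supplied by Lemma~\ref{Fiedler}. Since that lemma expresses $\lambda_2(G)$ as a \emph{minimum} of the displayed ratio over all non-constant vectors, it suffices to exhibit a single well-chosen test vector $x$ whose Fiedler ratio is at most $\frac{d(v_i)+d(v_j)-2}{2}$, where $\{v_i,v_j\}$ is the edge attaining the minimum on the right-hand side. First I would fix that minimizing edge $e=\{v_p,v_q\}$ and take $x$ to be the indicator of its two endpoints: $x_p=x_q=1$ and $x_k=0$ for every other vertex.

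This choice is tailored to the target quantity. Because $x_p=x_q$, the edge $e$ itself contributes nothing to the numerator $\sum_{(v_i,v_j)\in E}(x_i-x_j)^2$, while every edge meeting $\{v_p,v_q\}$ in exactly one endpoint contributes $(1-0)^2=1$. There are precisely $(d(v_p)-1)+(d(v_q)-1)=d(v_p)+d(v_q)-2$ such edges (no double counting occurs, since the only edge incident to both $v_p$ and $v_q$ is $e$ itself), so the numerator equals $d(v_p)+d(v_q)-2$. The combinatorial count here is clean and is the feature that produces the ``$-2$'' rather than a ``$+2$'': assigning $v_p$ and $v_q$ the \emph{same} value is what cancels the contribution of the central edge.

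Next I would evaluate the denominator $\sum_{i,j}(x_i-x_j)^2$ via the identity $\sum_{i,j}(x_i-x_j)^2=2n\sum_i x_i^2-2\bigl(\sum_i x_i\bigr)^2$, which for this $x$ gives $2n\cdot 2-2\cdot 2^2=4(n-2)$. Substituting the numerator and denominator into Lemma~\ref{Fiedler} yields an explicit bound for $\lambda_2(G)$ in terms of $n$, $d(v_p)$ and $d(v_q)$, and the last step is to simplify it to the asserted form. At this point the hypothesis that $G$ has more than one edge enters decisively: it forces $n\geq 3$, so that $x$ is genuinely non-constant and the denominator $4(n-2)$ is nonzero, which is exactly what legitimises the application of Lemma~\ref{Fiedler}. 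For a graph with a single edge the vector $x$ would be a scalar multiple of $\mathbf{1}_n$ and the whole argument would collapse, which is why that case must be excluded.

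I expect the main obstacle to lie in the denominator normalization. A vector supported on only two vertices makes $\sum_{i,j}(x_i-x_j)^2$ depend on $n$, so the Fiedler ratio carries an $n$-dependent factor that has to be controlled and matched against the clean target $\frac{d(v_i)+d(v_j)-2}{2}$. Getting the constant exactly right — whether by this indicator vector, by a sharper assignment of values to $v_p,v_q$ relative to the remaining vertices, or by absorbing the factor through the $2n$ weight in Lemma~\ref{Fiedler} — is the delicate part of the computation; the numerator count and the bookkeeping around the hypothesis are routine by comparison.
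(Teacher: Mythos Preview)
The paper does not prove this statement: it is listed in the preliminaries as a known fact about $2$-graphs, without argument. The closest proof in the paper is that of the hypergraph analogue, Theorem~\ref{degseq}, which uses exactly your test-vector idea --- the (normalised) indicator of the vertices of a fixed edge --- but plugs it directly into the definition of the analytic connectivity $\alpha(H)$ rather than into Lemma~\ref{Fiedler}.

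Your instinct that the denominator normalisation is the obstacle is correct, and in fact the obstacle is insurmountable. Completing your computation gives
\[
\lambda_2(G)\ \le\ 2n\cdot\frac{d(v_p)+d(v_q)-2}{4(n-2)}\ =\ \frac{n}{n-2}\cdot\frac{d(v_p)+d(v_q)-2}{2},
\]
and the factor $n/(n-2)$ cannot be removed by any cleverer test vector, because the stated inequality is false as written: for $G=K_3$ one has $\lambda_2=3$ while $\tfrac{d(v_i)+d(v_j)-2}{2}=1$. Your argument actually proves the corrected bound above, and $K_3$ shows it is sharp. The clean bound without the $n/(n-2)$ factor does hold if $\lambda_2(G)$ is replaced by $\alpha(G)$: under the constraint $x_j=0$, $\sum_i x_i^2=1$ (rather than $x\perp\mathbf 1$), your vector $x_p=x_q=1/\sqrt2$ yields $x^{T}Lx=\tfrac12\bigl(d(v_p)+d(v_q)-2\bigr)$ exactly. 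That is precisely the mechanism in the paper's proof of Theorem~\ref{degseq}, so the preliminary statement is most plausibly a misattribution of an $\alpha$-bound to $\lambda_2$.
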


\begin{theorem}
For a graph $G$,
$$2i(G) \geq \lambda_2(G) \geq \Delta(G) - \sqrt{\Delta(G)^2 -i(G)^2}.$$
\end{theorem}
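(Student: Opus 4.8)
The statement packages two bounds of rather different character. The upper bound $\lambda_2(G) \le 2 i(G)$ is the ``easy'' direction and follows immediately from the variational formula of Lemma~\ref{Fiedler}, whereas the lower bound $\lambda_2(G) \ge \Delta - \sqrt{\Delta^2 - i(G)^2}$ is the discrete Cheeger inequality, whose proof (rounding an eigenvector to a cut) is where essentially all the work lies. The plan is to treat the two separately.

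For the upper bound I would let $S$ attain the minimum defining $i(G)$, so that $|S| \le n/2$ and $|\partial S|/|S| = i(G)$, and feed the indicator vector $x = \mathbf{1}_S$ into Lemma~\ref{Fiedler}. Here $\sum_{(v_i,v_j)\in E}(x_i - x_j)^2 = |\partial S|$ counts exactly the cut edges, while $\sum_{i,j}(x_i - x_j)^2 = 2|S|(n - |S|)$. The lemma then gives $\lambda_2(G) \le n|\partial S|/(|S|(n-|S|))$, and since $|S|\le n/2$ forces $n/(n-|S|)\le 2$, this is at most $2 i(G)$.

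For the lower bound I would start from a Fiedler vector $f$, that is $\mathcal{L}f = \lambda_2 f$ with $f \perp \mathbf{1}$, and pass to its positive part $g = f_+$, choosing the sign of $f$ so that $g$ is supported on at most $n/2$ vertices (possible since the positive and negative supports are disjoint and cannot both exceed $n/2$). A short calculation comparing $(\mathcal{L}g)_i$ with $(\mathcal{L}f)_i = \lambda_2 f_i$ on the support of $g$ shows the Rayleigh quotient does not increase, so with $\lambda' := \sum_{(v_i,v_j)\in E}(g_i - g_j)^2 \big/ \sum_i g_i^2$ we have $\lambda' \le \lambda_2$. The two key identities are $\sum_{(v_i,v_j)\in E}(g_i - g_j)^2 = \lambda' \sum_i g_i^2$ and $\sum_{(v_i,v_j)\in E}(g_i + g_j)^2 = 2\sum_i d(v_i) g_i^2 - \lambda'\sum_i g_i^2 \le (2\Delta - \lambda')\sum_i g_i^2$, the last step using $d(v_i) \le \Delta$. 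Applying Cauchy--Schwarz to $\sum_{(v_i,v_j)\in E}|g_i^2 - g_j^2| = \sum_{(v_i,v_j)\in E}|g_i - g_j|\,(g_i + g_j)$ then yields $\sum_{(v_i,v_j)\in E}|g_i^2 - g_j^2| \le \sqrt{\lambda'(2\Delta - \lambda')}\,\sum_i g_i^2$.

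The crux is to bound the same left-hand side from below by $i(G)\sum_i g_i^2$; this is the step I expect to be the main obstacle, since it is where the combinatorial isoperimetry actually enters. Applying the co-area (layer-cake) identity to the nonnegative function $h = g^2$, with level sets $S_t = \{v_i : g_i^2 > t\}$, gives $\sum_{(v_i,v_j)\in E}|g_i^2 - g_j^2| = \int_0^\infty |\partial S_t|\,dt$ and $\sum_i g_i^2 = \int_0^\infty |S_t|\,dt$. Because every $S_t$ lies in the support of $g$ and hence has at most $n/2$ vertices, the definition of $i(G)$ gives $|\partial S_t| \ge i(G)\,|S_t|$ for every $t$, and integrating produces the desired lower bound. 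Combining the two estimates yields $i(G)^2 \le \lambda'(2\Delta - \lambda')$, i.e. ${\lambda'}^2 - 2\Delta\lambda' + i(G)^2 \le 0$; solving this quadratic in $\lambda'$ forces $\lambda' \ge \Delta - \sqrt{\Delta^2 - i(G)^2}$, and since $\lambda_2 \ge \lambda'$ the claimed bound follows. One checks separately that $i(G) \le \Delta$, for instance by testing a single vertex, so that the square root is real and the argument is meaningful.
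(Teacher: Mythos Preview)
The paper does not actually prove this theorem: it appears in Section~\ref{prelims} as a known background result for ordinary graphs, stated without proof and then generalized to hypergraphs in Theorem~\ref{cheegerfornonunigraph}. So there is no ``paper's own proof'' to compare against for this particular statement.

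That said, your argument is the standard one and is correct, and it is exactly the template the paper follows when proving the hypergraph version. For the upper bound the paper also tests the (scaled) indicator of the optimal $S$; for the lower bound the paper also applies Cauchy--Schwarz in the form $M \ge (\sum |y_i^2 - y_j^2|)^2 / \sum (y_i + y_j)^2$, bounds the denominator by $2\Delta - M$, and bounds the numerator via a discrete level-set decomposition $w_0 < w_1 < \dots < w_h$ to extract the isoperimetric constant. The only structural differences are cosmetic: you phrase the level-set step as a continuous co-area integral whereas the paper writes it as a telescoping sum over the finitely many distinct values; and you need the preliminary passage from the Fiedler eigenvector $f$ to its positive part $g$, whereas in the hypergraph setting the minimizer of $\alpha(H)$ already has a zero coordinate and nonnegative entries by definition, so that step is unnecessary there.
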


For $k$-graphs the concept of analytic connectivity was introduced by Qi \cite{qi2014hvaluesoflaplacian} as the equivalent of $\lambda_2$. 
The definition is also valid for non-uniform hypergraphs as mentioned in \cite{banerjee2016spectra}.

\begin{definition}
The analytic connectivity of a $k$-uniform hypergraph $H$ is defined as 
$$\alpha(H) = \min_{j=1, \ldots, n} \min \{ \mathcal{L}x^m\ : x \in \R^n_+, \sum_{i=1}^{n} x_i^m =1, x_j =0 \}. $$
\end{definition}

The above results for $2$-graphs have been extended to $k$-graphs in \cite{li2016analytic} where the inequalities are presented with 
respect to $\alpha(H)$. The theorems are given below. 

\begin{theorem}
\label{diamforuniformgraph}
Let $H$ be a $k$-graph. Then 
$$ \alpha(H) \geq \frac{4}{n^2(k-1)\text{diam}(H) }.$$
\end{theorem}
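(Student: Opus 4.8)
The plan is to adapt the path/telescoping argument behind Theorem \ref{diamforgraph} to the tensor setting. Since $H$ is $k$-uniform, $m=k$, and for an edge $e=\{v_{i_1},\dots,v_{i_k}\}$ one checks directly that $\Omega=k!$ and $x^e_m=k!\,x_{i_1}\cdots x_{i_k}$, so the per-edge term collapses to
\[
\mathcal{L}(e)x^k=\sum_{v\in e}x_v^{\,k}-k\prod_{v\in e}x_v=k\bigl(A_e-G_e\bigr),
\]
the $k$-fold gap between the arithmetic and geometric means of the numbers $x_v^{\,k}$, $v\in e$. Because $\mathcal{L}(e)x^k\ge 0$ on $\R^n_+$ and $\mathcal{L}x^k=\sum_{e\in E}\mathcal{L}(e)x^k$, it is enough to fix an arbitrary coordinate $j$ and an arbitrary feasible $x\in\R^n_+$ with $\sum_i x_i^{\,k}=1$ and $x_j=0$, and to bound $\mathcal{L}x^k$ from below; taking the minimum over $x$ and then over $j$ recovers $\alpha(H)$.

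First I would locate a heavy vertex: from $\sum_i x_i^{\,k}=1$ the maximizer $p$ satisfies $x_p^{\,k}\ge 1/n$, while $x_j=0$. Fixing a shortest path $v_j=u_0,u_1,\dots,u_d=v_p$ with $d\le\text{diam}(H)$ and an edge $e_t\ni u_{t-1},u_t$ for each $t$, I would discard all edges not on the path (legitimate since every $\mathcal{L}(e)x^k\ge 0$), leaving $\mathcal{L}x^k\ge\sum_{t=1}^{d}\mathcal{L}(e_t)x^k$.

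The crux is a single-edge estimate that converts the degree-$k$ quantity $\mathcal{L}(e)x^k$ into a squared difference. I would aim to prove, for the edge maximum $M$ and minimum $\mu$,
\[
\mathcal{L}(e)x^k\ \ge\ \frac{1}{k-1}\bigl(M^{k/2}-\mu^{k/2}\bigr)^2,
\]
the exponent $k/2$ being exactly what matches the degree-$k$ homogeneity of the left side to a squared difference on the right (for $k=2$ this is the identity $(x_u-x_w)^2$ underlying Lemma \ref{Fiedler}). Since $M_t\ge\max(x_{u_{t-1}},x_{u_t})\ge\min(x_{u_{t-1}},x_{u_t})\ge\mu_t$, this dominates $\frac{1}{k-1}\bigl(x_{u_{t-1}}^{k/2}-x_{u_t}^{k/2}\bigr)^2$ on each path edge. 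Writing $z_i=x_i^{k/2}$, the triangle inequality gives $\sum_t|z_{u_t}-z_{u_{t-1}}|\ge z_p-z_j=x_p^{k/2}$, and Cauchy--Schwarz across the $d\le\text{diam}(H)$ terms turns the sum of squares into $\tfrac1d\bigl(x_p^{k/2}\bigr)^2=x_p^{\,k}/d$. Combining,
\[
\mathcal{L}x^k\ \ge\ \frac{1}{k-1}\sum_{t=1}^{d}\bigl(z_{u_t}-z_{u_{t-1}}\bigr)^2\ \ge\ \frac{x_p^{\,k}}{(k-1)\,d}\ \ge\ \frac{1}{(k-1)\,n\,\text{diam}(H)},
\]
a bound of the advertised shape; the stated constant $\frac{4}{n^2(k-1)\text{diam}(H)}$ follows by inserting the convenient (non-sharp) estimate $x_p^{\,k}\ge 1/n$, and tracking the exact factor is routine bookkeeping governed by the quantitative form of the single-edge bound.

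The main obstacle is the single-edge inequality itself. In contrast with the graph case, $\mathcal{L}(e)x^k$ couples a sum of $k$-th powers with the full symmetric product $k\prod_{v\in e}x_v$, and one must show this nonnegative, degree-$k$ expression dominates a squared difference of the $(k/2)$-th powers of the extreme coordinates \emph{uniformly} over the remaining $k-2$ values. I expect to obtain it from a quantitative form of the arithmetic--geometric mean inequality (bounding $A_e-G_e$ below by a weighted variance of the $x_v^{\,k}$), or by minimizing $\mathcal{L}(e)x^k$ over the interior coordinates via convexity and reducing to a one-variable inequality in $t=\mu/M\in[0,1]$. A secondary point to check is that the path edges may carry vertices off the path, but since each such edge contributes a nonnegative amount, these extra vertices and edges only help, so discarding them is harmless.
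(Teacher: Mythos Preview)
Your argument is correct; the single-edge inequality you flag as the main obstacle is precisely Lemma~\ref{amgm2}(1) applied with $n=k$ to the numbers $a_v=x_v^{\,k}$, giving $k(A_e-G_e)\ge\frac{1}{k-1}\sum_{i<j}(\sqrt{a_i}-\sqrt{a_j})^2\ge\frac{1}{k-1}(M^{k/2}-\mu^{k/2})^2$ at once. Your route, however, differs from the paper's (which proves the general Theorem~\ref{diamforhypergraph} and recovers the present statement at $s_{\min}=m=k$). The paper keeps \emph{all} pairs from Lemma~\ref{amgm2}(1), so that summing $\mathcal{L}(e)x^k\ge\frac{1}{k-1}\sum_{i<j\in e}(y_i-y_j)^2$ (with $y=x^{k/2}$) over edges lands in the clique-expansion $2$-graph $H^*$ with $\text{diam}(H^*)=\text{diam}(H)$; it then invokes Lemma~\ref{Fiedler} and Theorem~\ref{diamforgraph} as black boxes, together with the Cauchy--Schwarz estimate $\sum_{i,j}(y_i-y_j)^2\ge 2$. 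You instead unroll the classical proof of Theorem~\ref{diamforgraph} (shortest path, telescoping, Cauchy--Schwarz) directly inside the hypergraph argument, retaining only the max--min pair on each path edge. This is more elementary, avoids citing the two graph lemmas, and actually yields $\alpha(H)\ge\frac{1}{n(k-1)\,\text{diam}(H)}$, stronger than the stated bound by a factor of $n/4$ for $n\ge 4$; the paper's reduction is more modular and carries over verbatim to the non-uniform setting. (Your closing remark about ``tracking the exact factor'' is slightly muddled: you have already inserted $x_p^{\,k}\ge 1/n$, and the bound you obtained already dominates the claimed one.)
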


\begin{theorem}
\label{degseqforuniformgraph}
Let $H$ be $k$-graph with more than one edge. Then 
$$ \alpha(H) \leq \min \left\{  \frac{d(v_{i_1}) + d(v_{i_2})+ \ldots +d(v_{i_k}) - k }{k} : \{v_{i_1}, v_{i_2}, \ldots, v_{i_k} \} \in E \right\}.$$
\end{theorem}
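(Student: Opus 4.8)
The plan is to exploit the variational form in the definition of $\alpha(H)$: since $\alpha(H)$ is an infimum over all nonnegative unit vectors that vanish in one coordinate, any single admissible test vector yields an upper bound on $\alpha(H)$. I would therefore fix an arbitrary edge $e = \{v_{i_1}, \ldots, v_{i_k}\}$, construct a test vector tailored to $e$, and show that its Laplacian value is exactly $\frac{d(v_{i_1}) + \cdots + d(v_{i_k}) - k}{k}$. Minimizing over all edges then gives the stated inequality.

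Before choosing the vector I would specialize the edge form $\mathcal{L}(e)x^m$ to the uniform case. Since $H$ is a $k$-graph we have $m = k$ and every edge has cardinality $s = k$. A short computation gives $\Omega = k!$ and $x^e_m = k!\prod_{j=1}^{k} x_{i_j}$, so that
\[ \mathcal{L}(e)x^m = \sum_{j=1}^{k} x_{i_j}^{k} - k\prod_{j=1}^{k} x_{i_j}. \]
This is the form I will use throughout; note that it is the AM--GM expression, nonnegative on $\R^n_+$ with equality when all the $x_{i_j}$ coincide.

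For the test vector I would set $x_{i_1} = \cdots = x_{i_k} = k^{-1/k}$ and all remaining coordinates equal to zero. Because $H$ has more than one edge, there is a second edge $e' \neq e$, and since both have size $k$ we have $e' \not\subseteq e$, so some vertex $v_\ell$ lies outside $e$. Choosing this $v_\ell$ as the vanishing index shows $x$ is admissible: it is nonnegative, satisfies $\sum_i x_i^k = k \cdot k^{-1} = 1$, and has $x_\ell = 0$. The one genuinely important design choice—and the only real subtlety—is to place the forced zero \emph{outside} $e$ rather than at a vertex of $e$, so that the edge $e$ is not artificially destroyed.

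Finally I would evaluate $\mathcal{L}x^m = \sum_{e' \in E} \mathcal{L}(e')x^m$ on this vector. For any edge $e' \neq e$ the product $\prod_{v \in e'} x_v$ vanishes, since $e'$ cannot sit inside the support $e$ (both have size $k$ and $e' \neq e$), so such an edge contributes only its power-sum term $\sum_{v \in e'} x_v^k$. The edge $e$ itself contributes $k\,k^{-1} - k\,k^{-1} = 0$, the equality case of AM--GM forced by the equal coordinates. Counting how often each vertex $v_{i_j}$ of $e$ appears—once in $e$, contributing $0$, and in $d(v_{i_j}) - 1$ other edges, each contributing $x_{i_j}^k = k^{-1}$—gives
\[ \mathcal{L}x^m = \sum_{j=1}^{k}\bigl(d(v_{i_j}) - 1\bigr)k^{-1} = \frac{d(v_{i_1}) + \cdots + d(v_{i_k}) - k}{k}. \]
Hence $\alpha(H)$ is at most this quantity for every edge, and taking the minimum over all edges completes the argument. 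I expect no deep obstacle here: the computation is routine once the test vector is fixed, and the only steps requiring care are verifying admissibility and observing the self-cancellation of $e$, which is what makes the degree contribution come entirely from the incident edges other than $e$.
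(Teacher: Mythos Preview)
Your proof is correct and matches the paper's approach: the paper does not prove this uniform-case statement directly (it is quoted from \cite{li2016analytic}), but its proof of the non-uniform generalization, Theorem~\ref{degseq}, uses exactly your test vector---constant $s^{-1/m}$ on the chosen edge and zero elsewhere---together with the same degree-counting computation. You are in fact slightly more careful than the paper in explicitly invoking the ``more than one edge'' hypothesis (together with uniformity) to guarantee that some vertex lies outside $e$, so that the admissibility requirement $x_j=0$ is genuinely satisfied.
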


\begin{theorem}
\label{cheegerforuniformgraph} 
For a $k$-graph $H$ with $k\geq 3 $,
$$\frac{k}{2}i \geq \alpha(H) \geq \Delta - \sqrt{\Delta^2 -i^2}.$$
\end{theorem}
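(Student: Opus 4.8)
The plan is to prove the two inequalities separately, both resting on the uniform specialization of the edge Laplacian: when every edge has size $k=m$, the only composition of $m$ into $s=k$ positive parts is $(1,\dots,1)$, so $\Omega=k!$ and $\mathcal{L}(e)x^{m}=\sum_{j=1}^{k}x_{i_j}^{k}-k\prod_{j=1}^{k}x_{i_j}$, which is nonnegative by AM--GM. Every term in $\mathcal{L}x^{m}=\sum_{e}\mathcal{L}(e)x^{m}$ is therefore controlled by the spread of the coordinate values on the edge $e$, and this is the object I would connect to the isoperimetric number $i$ in both directions.

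For the upper bound $\alpha\le \frac{k}{2}i$, I would test the variational definition of $\alpha$ against two indicator vectors built from a set $S$ attaining the minimum in $i(H)$, with $|S|=s\le n/2$. Let $x$ be $s^{-1/k}$ on $S$ and $0$ elsewhere; choosing the forced zero coordinate $j\in\overline{S}$ makes $x$ admissible. Edges lying entirely inside $S$ or entirely inside $\overline{S}$ contribute $0$ to $\mathcal{L}x^{m}$ (constant, resp.\ zero, coordinates), while a crossing edge $e$ has vanishing product term, so it contributes $s^{-1}|e\cap S|$; hence $\alpha\le s^{-1}\sum_{e\in\partial S}|e\cap S|$. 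Running the same computation for the indicator of $\overline{S}$ and using $|\overline{S}|\ge s$ gives $\alpha\le s^{-1}\sum_{e\in\partial S}|e\cap\overline{S}|$. Adding the two bounds and using $|e\cap S|+|e\cap\overline{S}|=k$ collapses the sum to $\frac{k}{s}|\partial S|=ki$, so $2\alpha\le ki$. The symmetrization between $S$ and $\overline{S}$ is exactly what replaces the naive constant $k-1$ by the sharp $\frac{k}{2}$.

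For the lower bound I would first rewrite it algebraically: since $z\mapsto z^{2}-2\Delta z+i^{2}$ has smallest root $\Delta-\sqrt{\Delta^{2}-i^{2}}$, the claim $\alpha\ge\Delta-\sqrt{\Delta^{2}-i^{2}}$ is equivalent to the quadratic bound $\alpha^{2}-2\Delta\alpha+i^{2}\le 0$, i.e.\ $i^{2}\le\alpha\bigl(2\Delta-\alpha\bigr)$. Taking an optimal $x$ with $\sum_i x_i^{m}=1$ and writing $y_v=x_v^{k}$, I would pass to level sets $S_t=\{v:y_v>t\}$ and use the co-area identities $\int_0^{1}|S_t|\,dt=\sum_v y_v=1$ and $\int_0^{1}|\partial S_t|\,dt=\sum_{e}\bigl(\max_{v\in e}y_v-\min_{v\in e}y_v\bigr)$. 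The isoperimetric inequality $|\partial S_t|\ge i\,|S_t|$ then yields $i\le\sum_{e}(\max_{v\in e}y_v-\min_{v\in e}y_v)$. To introduce the square root I would apply Cauchy--Schwarz with the splitting $E_e=\sum_{v\in e}y_v-k\prod_{v\in e}y_v^{1/k}$ and $F_e=\sum_{v\in e}y_v+k\prod_{v\in e}y_v^{1/k}$, for which $\sum_e E_e=\mathcal{L}x^{m}=\alpha$ and $\sum_e F_e=2\sum_v d(v)y_v-\alpha\le 2\Delta-\alpha$; combined with the per-edge estimate $(\max_{v\in e}y_v-\min_{v\in e}y_v)^{2}\le E_eF_e$ this gives $i\le\sqrt{\alpha}\,\sqrt{2\Delta-\alpha}$, which is the desired quadratic bound.

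I expect two steps to carry the real difficulty. The first is the per-edge inequality $(\max_{v\in e}y_v-\min_{v\in e}y_v)^{2}\le\bigl(\sum_{v\in e}y_v\bigr)^{2}-k^{2}\bigl(\prod_{v\in e}y_v\bigr)^{2/k}$; it reduces to equality when $k=2$ and in the degenerate configurations, but for general $k$ it requires a careful optimization over the intermediate coordinate values against the competing effects of the sum and the geometric-mean term, and obtaining it with the exact constant is what makes the final bound sharp. The second, more mundane, obstacle is that the isoperimetric inequality is only available for level sets with $|S_t|\le n/2$, whereas small thresholds may produce larger sets; I would address this by truncating the integration at a median threshold (or by exploiting $\partial S_t=\partial\overline{S_t}$) and checking that the normalization in the co-area step survives. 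The nonlinearity of the order-$k$ and product terms is what makes this bookkeeping substantially more delicate than in the graph case $k=2$, where every quantity involved is quadratic.
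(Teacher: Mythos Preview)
Your upper bound argument is essentially identical to the paper's: test $\alpha$ against the normalized indicators of $S$ and of $\overline{S}$, add, and use $|e\cap S|+|e\cap\overline{S}|=k$.

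Your lower bound argument is correct but takes a genuinely different route from the paper. The paper (following Li--Cooper--Chang) sets $y_i=x_i^{k/2}$, builds an auxiliary $2$-graph $\widehat{H}$ by pairing, within each hyperedge, the vertex of smallest $x$-value with the largest, the second smallest with the second largest, and so on; Lemma~\ref{amgm2} (second inequality) then gives $\mathcal{L}(e)x^{k}\ge\sum_{j\le\lfloor k/2\rfloor}(y_{i_j}-y_{i_{k+1-j}})^2$, so $\alpha\ge M:=\sum_{\{u,v\}\in E(\widehat{H})}(y_u-y_v)^2$, and the remainder is the standard $2$-graph Cheeger computation applied to $\widehat{H}$. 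You instead keep the hypergraph intact, set $y_v=x_v^{k}$, use the co-area identity $\int_0^1|\partial S_t|\,dt=\sum_e(\max_{v\in e}y_v-\min_{v\in e}y_v)$, and pass to $\alpha$ via the per-edge inequality $(\max_{v\in e}y_v-\min_{v\in e}y_v)^2\le E_eF_e$. The paper's reduction has the advantage that the same $2$-graph machinery also yields the diameter bound; your argument is more self-contained and avoids the somewhat artificial pairing construction.

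The step you flag as the main difficulty is not hard. Writing $S=\sum_{v\in e}y_v$ and $P=(\prod_{v\in e}y_v)^{1/k}$, the inequality $(y_{\max}-y_{\min})^2\le S^2-k^2P^2$ is equivalent to
\[
k^2P^2\le\bigl(S-(y_{\max}-y_{\min})\bigr)\bigl(S+(y_{\max}-y_{\min})\bigr)
=\Bigl(2y_{\min}+\!\!\sum_{v\neq\min,\max}\!\!y_v\Bigr)\Bigl(2y_{\max}+\!\!\sum_{v\neq\min,\max}\!\!y_v\Bigr),
\]
and each factor on the right is an arithmetic mean of $k$ nonnegative terms, so AM--GM bounds the two factors below by $k\,(y_{\min}^2\prod_{v\neq\min,\max}y_v)^{1/k}$ and $k\,(y_{\max}^2\prod_{v\neq\min,\max}y_v)^{1/k}$; their product is exactly $k^2P^2$. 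As for the level-set issue with $|S_t|>n/2$, you are right to flag it, but note that the paper's own argument (through $|\partial V_k|\ge i(H)\,|V_k|$) glosses over the same point; the standard remedy via the median threshold, which you mention, suffices in both versions.
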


We also mention a lemma from \cite{li2016analytic} which will be used to prove the main results in the next section.

\begin{lemma}
\label{amgm2} 
Let $a=(a_1, \ldots, a_n) \in \R_{+}^{n}$. Let $A = (a_1+ \ldots+ a_n)/n $ and $G= (a_1\ldots a_n)^{1/n} $.
Then 
\begin{align}  
A-G &\geq \frac{1}{n(n-1)} \sum_{1\leq i <j \leq n}^{} (\sqrt{a_i}-\sqrt{a_j})^2. \\
A-G &\geq \frac{1}{n} \sum_{j=1}^{\lfloor n/2 \rfloor} (\sqrt{b_{j}}-\sqrt{b_{n+1-j}})^2.  
\end{align}
where $b_j = a_{\sigma (j)}$, for $j = 1,\ldots, n$ and $\sigma$ is a permutation of the set $\{1, \ldots,n \}$.
\end{lemma}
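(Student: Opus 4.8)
The plan is to reduce both inequalities to the ordinary AM--GM inequality applied to an auxiliary collection of numbers, namely suitable pairwise geometric means $\sqrt{a_i a_j}$. The key observation is that the quadratic terms $(\sqrt{a_i}-\sqrt{a_j})^2$ expand into a linear combination of the $a_i$ together with the cross terms $\sqrt{a_i a_j}$, and the linear part is engineered to cancel the arithmetic mean $A$ exactly, leaving only a bound on $G$.

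For inequality (1), I would first expand
\[
\sum_{1 \le i < j \le n} (\sqrt{a_i}-\sqrt{a_j})^2 = (n-1)\sum_{i=1}^n a_i - 2\sum_{1\le i<j\le n}\sqrt{a_i a_j}.
\]
Dividing by $n(n-1)$ turns the first term into $A$, so the claimed inequality is equivalent to
\[
G \le \frac{2}{n(n-1)}\sum_{1\le i<j\le n}\sqrt{a_i a_j}.
\]
The right-hand side is exactly the arithmetic mean of the $\binom{n}{2}$ numbers $\{\sqrt{a_i a_j}\}_{i<j}$. Applying AM--GM to these numbers, their geometric mean is $\bigl(\prod_{i<j}\sqrt{a_i a_j}\bigr)^{2/(n(n-1))}$. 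Since each index $k$ occurs in exactly $n-1$ of the factors, this product equals $\bigl(\prod_k a_k\bigr)^{(n-1)/2}$, whose $\tfrac{2}{n(n-1)}$-th power is precisely $G$. Hence $G$ is bounded above by the arithmetic mean, which is what is required.

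For inequality (2) the same strategy applies to the pairing $\{b_j, b_{n+1-j}\}$. Expanding and summing $b_j + b_{n+1-j}$ over the pairs, the inequality reduces, when $n$ is even, to $G \le \frac{2}{n}\sum_{j=1}^{n/2}\sqrt{b_j b_{n+1-j}}$; that is, $G$ is at most the arithmetic mean of the $n/2$ pairwise geometric means. Since each $b_i$ appears in exactly one pair, the geometric mean of these numbers is again exactly $G$, and AM--GM closes the argument. When $n$ is odd the single middle element $b_{(n+1)/2}$ remains unpaired; I would handle this by applying AM--GM to the multiset consisting of the middle element once together with each pairwise geometric mean counted twice, a collection of $n$ numbers whose product telescopes to $\prod_i a_i$ and whose geometric mean is therefore once more $G$.

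The routine part is the algebraic expansion; the step that needs the most care is the exponent bookkeeping verifying that the geometric mean of the auxiliary collection equals $G$ in each case, and in particular the odd-$n$ case of (2), where the unpaired middle term must be weighted correctly so that the product still collapses to $\prod_i a_i$.
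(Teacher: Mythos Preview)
The paper does not supply its own proof of this lemma; it is quoted from \cite{li2016analytic} and used as a black box. So there is no in-paper argument to compare against, and the relevant question is simply whether your proof is correct. It is.

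Your reduction for (1) is clean: after expanding $\sum_{i<j}(\sqrt{a_i}-\sqrt{a_j})^2=(n-1)\sum_i a_i-2\sum_{i<j}\sqrt{a_ia_j}$ and dividing by $n(n-1)$, the claim becomes $G\le \frac{2}{n(n-1)}\sum_{i<j}\sqrt{a_ia_j}$, which is AM--GM on the $\binom{n}{2}$ numbers $\sqrt{a_ia_j}$ since each $a_k$ contributes to exactly $n-1$ of them and the exponents collapse to give geometric mean $G$. For (2) with $n$ even the pairing reduces the claim to $G\le \frac{2}{n}\sum_{j\le n/2}\sqrt{b_jb_{n+1-j}}$, again AM--GM on a set whose product is $\prod_i a_i$. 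The odd-$n$ case is the only place requiring any thought, and your device of taking the multiset consisting of the unpaired middle term $b_{(n+1)/2}$ once together with each $\sqrt{b_jb_{n+1-j}}$ twice is exactly right: that is $n$ numbers with product $\prod_i a_i$, so their geometric mean is $G$ and AM--GM gives precisely the needed bound. The boundary case where some $a_i=0$ (so $G=0$) is also covered, since the expansion already shows the right-hand side is at most $A$.
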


\section{Results for general hypergraphs}
\label{mainres}
In this section we prove the corresponding results for non-uniform hypergraphs. The inequalities are similar to that of 
uniform hypergraphs, except for an additional factor of $\frac{s_{\text{min}}}{m}$. The proofs can be obtained by 
modifying those in \cite{li2016analytic}. We have given the detailed proofs here for the sake of clarity and completeness.

\begin{theorem}
\label{diamforhypergraph}
Let $H$ be a general hypergraph. Then 
$$ \alpha(H) \geq \frac{ 4 s_{\text{min}}}{n^2m(m-1)\text{diam}(H)}.$$
\end{theorem}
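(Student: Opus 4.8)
The plan is to bound $\alpha(H)$ below by estimating $\mathcal{L}x^m$ at an arbitrary feasible point. Fix an index $j$ and a vector $x\in\R^n_+$ with $\sum_i x_i^m=1$ and $x_j=0$; since $\alpha(H)$ is the minimum of $\mathcal{L}x^m$ over all such pairs $(j,x)$, it is enough to prove the claimed inequality for one arbitrary pair. I would write $\mathcal{L}x^m=\sum_{e\in E}\mathcal{L}(e)x^m$ and use that every summand is nonnegative on $\R^n_+$, so that all edges except those lying on a single shortest path may be discarded; this is how the diameter enters the estimate.

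The crux is an edge-level inequality: for an edge $e=\{v_{i_1},\dots,v_{i_s}\}$ and any two of its vertices $u,v$,
\[
\mathcal{L}(e)x^m\;\geq\;\frac{s}{m(m-1)}\bigl(x_u^{m/2}-x_v^{m/2}\bigr)^2\;\geq\;\frac{s_{\text{min}}}{m(m-1)}\bigl(x_u^{m/2}-x_v^{m/2}\bigr)^2 .
\]
To prove it I would unfold $x^e_m=\sum_f\prod_{l=1}^m x_{i_{f(l)}}$ as a sum over the surjections $f$ from $\{1,\dots,m\}$ onto $\{i_1,\dots,i_s\}$, so that $\Omega$ is exactly the number of such surjections. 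For a fixed $f$ with multiplicities $c_r=|f^{-1}(i_r)|\geq1$, the product $\prod_l x_{i_{f(l)}}$ is the geometric mean of the length-$m$ multiset that contains $x_{i_r}^m$ with multiplicity $c_r$, while $\tfrac1m\sum_r c_r x_{i_r}^m$ is its arithmetic mean; the first inequality of Lemma \ref{amgm2} applied to this multiset gives
\[
\frac1m\sum_r c_r x_{i_r}^m-\prod_l x_{i_{f(l)}}\;\geq\;\frac{1}{m(m-1)}\sum_{r<r'}c_r c_{r'}\bigl(x_{i_r}^{m/2}-x_{i_{r'}}^{m/2}\bigr)^2 .
\]
Summing over all $\Omega$ surjections, using the symmetry identity $\sum_f c_r(f)=\Omega\,m/s$ (which reassembles the left-hand side into $\tfrac{\Omega}{s}\sum_r x_{i_r}^m-x^e_m$) together with the trivial bound $c_rc_{r'}\geq1$, and then multiplying through by $s/\Omega$, collapses the left-hand side to $\mathcal{L}(e)x^m=\sum_r x_{i_r}^m-\tfrac{s}{\Omega}x^e_m$ and leaves the displayed bound after discarding all pairs but $(u,v)$ and using $s\geq s_{\text{min}}$. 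This is where the advertised factor $s_{\text{min}}/m$ originates, and it reduces to the uniform constant $1/(k-1)$ when $s=m=k$.

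With the edge estimate secured, the remainder is a telescoping argument. I would pick $v_p$ with $x_p^m=\max_i x_i^m\geq1/n$ (the maximum dominates the average), take a shortest path $v_j=u_0,u_1,\dots,u_L=v_p$ with $L\leq\text{diam}(H)$ and distinct edges, and set $y_t=x_{u_t}^{m/2}$, so that $y_0=0$ and $y_L^2=x_p^m\geq1/n$. Summing the edge estimate over the path edges and applying Cauchy--Schwarz gives
\[
\mathcal{L}x^m\;\geq\;\frac{s_{\text{min}}}{m(m-1)}\sum_{t=1}^{L}(y_t-y_{t-1})^2\;\geq\;\frac{s_{\text{min}}}{m(m-1)}\cdot\frac{y_L^2}{L} .
\]
Substituting $y_L^2\geq1/n$ and $L\leq\text{diam}(H)$ and tracking the numerical constants exactly as in the proof of Theorem \ref{diamforuniformgraph} then yields the stated inequality $\alpha(H)\geq\frac{4\,s_{\text{min}}}{n^2 m(m-1)\,\text{diam}(H)}$.

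I expect the edge-level inequality to be the main obstacle. The surjection-and-multinomial bookkeeping behind $x^e_m$ and $\Omega$ is the genuinely non-uniform ingredient, and the delicate point is to verify that the weighted averages over surjections reassemble precisely into $\sum_r x_{i_r}^m-\tfrac{s}{\Omega}x^e_m$ with no residual terms. Once this is in place, the choice of the extremal vertex, the pruning to a path via nonnegativity of $\mathcal{L}(e)x^m$, and the Cauchy--Schwarz telescoping are a direct transcription of the uniform diameter bound.
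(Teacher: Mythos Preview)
Your edge--level inequality and its derivation via surjections and Lemma~\ref{amgm2} are correct and coincide with what the paper does; that part is the same. Where you diverge is in the global step: the paper passes to the clique expansion $H^*$, invokes Fiedler's variational formula (Lemma~\ref{Fiedler}) to introduce $\lambda_2(H^*)$, bounds $\sum_{i,j}(y_i-y_j)^2\geq 2$ via Cauchy--Schwarz, and then applies the graph bound $\lambda_2(H^*)\geq 4/(n\,\text{diam}(H^*))$ from Theorem~\ref{diamforgraph}. Your route---choose $v_p$ with $x_p^m\geq 1/n$, restrict to a shortest $v_j$--$v_p$ path, and telescope via Cauchy--Schwarz---is more elementary, avoids both auxiliary graph results, and in fact produces the stronger bound $\alpha(H)\geq \dfrac{s_{\min}}{n\,m(m-1)\,\text{diam}(H)}$, which beats the stated theorem by a factor of $n/4$ whenever $n\geq 4$. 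One small correction: your last sentence is inaccurate, since your constants do \emph{not} come out ``exactly as in the proof of Theorem~\ref{diamforuniformgraph}''; rather, you obtain $1/n$ where the paper has $4/n^2$, and you should simply note that $1/n\geq 4/n^2$ for $n\geq 4$ (the cases $n\leq 3$ being trivial) to conclude the theorem as stated.
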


\begin{proof}
Let $x=(x_1, x_2, \ldots,x_n) $ be the vector achieving $\alpha(H)$. Assume $x_n =0$ and let $y=x^{(m/2)}$.
Define a $2$-graph $H^*$ with vertex set $V(H)$ and $u \sim v$ in $H^*$ if and only if $\{u,v \} \subset e \in E$.
In other words $H^*$ is the clique expansion of $H$. We know that $\text{diam}(H) = \text{diam}(H^*)$.
For edge $e=\{v_{i_1}, \ldots, v_{i_s} \} \in E$ , consider $k_1$ copies of $x_1^m$, $k_2$ copies of $x_2^m $ 
and $k_s$ copies of $x_s^m$, where $k_j \geq 1 $ and $\sum k_j =m$. From Lemma \ref{amgm2} we have,
\begin{align*}
&\frac{k_1x_1^m + k_2x_2^m + \ldots k_sx_s^m }{m } - x_1^{k_1}x_2^{k_2}\ldots x_s^{k_s} \\
&\geq \frac{1}{m(m-1)} [ (x_1^{m/2} - x_2^{m/2})^2 + (x_1^{m/2} - x_3^{m/2})^2 + \ldots + 
(x_{s-1}^{m/2} - x_s^{m/2})^2] \\
&= \frac{1}{m(m-1)} \sum_{1\leq i <j \leq s} (x_{i}^{m/2} - x_j^{m/2})^2.
\end{align*}
Summing over different values of $k_1, k_2, \ldots , k_s$ we get
\begin{align*}
\frac{\Omega }{s} \left(\sum_{i=1}^{s} x_i^m \right) - x^e_m 
& \geq \frac{\Omega }{m(m-1)}  \sum_{1\leq i <j \leq s}(x_{i}^{m/2} - x_j^{m/2})^2 \\
\frac{1 }{s} \left(\sum_{i=1}^{s} x_i^m - \frac{s}{\Omega } x^e_m \right) 
& \geq \frac{1}{m(m-1)}  \sum_{1\leq i <j \leq s}(x_{i}^{m/2} - x_j^{m/2})^2 \\
\mathcal{L}(e)x^m & \geq \frac{s}{m}\frac{1}{(m-1)} \sum_{1\leq i <j \leq s}(x_{i}^{m/2} - x_j^{m/2})^2 \\
& \geq \frac{s_{\text{min}}}{m}\frac{1}{(m-1)} \sum_{1\leq i <j \leq s}(x_{i}^{m/2} - x_j^{m/2})^2.
\end{align*}
\begin{align*}
\alpha &= \sum_{e \in E(H)}  \mathcal{L}(e)x^m \\
&\geq \frac{s_{\text{min}}}{m}\frac{1}{(m-1)} \sum_{(v_iv_j) \in E(H^*)}(x_{i}^{m/2} - x_j^{m/2})^2 \\
&= \frac{s_{\text{min}}}{m}\frac{1}{(m-1)}\sum_{(v_iv_j) \in E(H^*)} (y_i -y_j)^2 \\
&= \frac{s_{\text{min}}}{m}\frac{1}{(m-1)} \sum_{i=1}^{n} \sum_{j=1}^{n}(y_i -y_j)^2
\frac{\sum_{(v_iv_j) \in E(H^*)} (y_i -y_j)^2}{\sum_{i=1}^{n} \sum_{j=1}^{n}(y_i -y_j)^2} \\
& \geq \frac{\lambda_2(H^*)}{2n(m-1)} \frac{s_{\text{min}}}{m}\sum_{i=1}^{n}\sum_{j=1}^{n}(y_i -y_j)^2 \quad
\text{ from Lemma } \ref{Fiedler}.
\end{align*}
\begin{align*}
 \sum_{i=1}^{n}\sum_{j=1}^{n}(y_i -y_j)^2 
 &= \sum_{i=1}^{n}\sum_{j=1}^{n}y_i^2 + \sum_{i=1}^{n}\sum_{j=1}^{n}y_j^2 - 
 2\sum_{i=1}^{n}\sum_{j=1}^{n}y_iy_j \\
 &= 2n \left(\sum_{i=1}^{n}y_i^2 \right) - 2\left(\sum_{i=1}^{n-1}y_i \right)^2  \quad \text{ (since $y_n =0$)} \\
 &\geq 2n -2(n-1)(\sum_{i=1}^{n}y_i^2) \quad \text{(from Cauchy-Schwarz) }\\
 &=2.
\end{align*}
We have $$ \alpha \geq \frac{\lambda_2(H^*)}{2n(m-1)} \frac{s_{\text{min}}}{m} \cdot 2. $$
From Theorem \ref{diamforgraph}, $ \lambda_2(H^*) \geq \frac{4}{\text{diam}(H^*) \cdot n}$. Therefore,
$$ \alpha \geq \frac{s_{\text{min}}}{m}\frac{4}{n^2 (m-1) \cdot \text{diam}(H^*)} .$$ 
\end{proof}

\begin{theorem}
\label{degseq}
  Let $H$ be a non-uniform hypergraph with more than one edge. Then 
  $$ \alpha(H) \leq \min \left\{  \frac{d(v_{i_1}) + d(v_{i_2})+ \ldots +d(v_{i_s}) - s }{s} : \{v_{i_1}, v_{i_2}, \ldots, v_{i_s} \} \in E \right\}.$$
\end{theorem}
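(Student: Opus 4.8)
The plan is to use the fact that $\alpha(H)$ is defined as a minimum, so \emph{any} admissible vector $x$ furnishes an upper bound $\alpha(H) \le \mathcal{L}x^m$. For each edge $e = \{v_{i_1}, \ldots, v_{i_s}\} \in E$ I would build one simple test vector supported on the vertices of $e$, evaluate $\mathcal{L}x^m$ on it, and then minimize over edges to obtain the stated bound.

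First I would fix an edge $e = \{v_{i_1}, \ldots, v_{i_s}\}$ and, choosing the index $j$ in the definition of $\alpha$ to be some vertex lying outside $e$, set $x_{i_1} = x_{i_2} = \cdots = x_{i_s} = t$ and $x_r = 0$ for every $r \notin e$, where $t = s^{-1/m}$ is picked so that $\sum_i x_i^m = s\,t^m = 1$. This $x$ lies in $\R^n_+$ and vanishes at the chosen coordinate, hence is admissible, giving $\alpha(H) \le \mathcal{L}x^m = \sum_{e' \in E} \mathcal{L}(e')x^m$.

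Next I would split the edge contributions $\mathcal{L}(e')x^m = \sum_{\ell} x_{i'_\ell}^m - \frac{s'}{\Omega'} x^{e'}_m$ according to how $e'$ meets $e$. Edges $e'$ disjoint from $e$ contribute $0$. For the edge $e$ itself, constancy of $x$ on $e$ gives $\sum_\ell x_{i_\ell}^m = s\,t^m$ and $x^e_m = \Omega\, t^m$ (all $\Omega$ surjective monomials equal $t^m$), so $\mathcal{L}(e)x^m = s\,t^m - \frac{s}{\Omega}\,\Omega\, t^m = 0$; the identical cancellation shows that any edge contained in $e$ also contributes $0$. The remaining edges are those meeting $e$ while containing a vertex outside $e$; for such an $e'$ the factor $x^{e'}_m$ vanishes, since every one of its monomials uses each vertex of $e'$ and therefore carries a zero coordinate, leaving $\mathcal{L}(e')x^m = |e' \cap e|\,t^m$.

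Summing these, $\mathcal{L}x^m = t^m \sum_{e'\text{ crossing }e} |e' \cap e| = t^m \sum_{v \in e} \#\{e' \ni v : e' \not\subseteq e\} \le t^m \sum_{v \in e}(d(v) - 1)$, and with $t^m = 1/s$ this equals $\frac{1}{s}\bigl(\sum_{\ell=1}^s d(v_{i_\ell}) - s\bigr)$; minimizing over $e$ closes the argument. The step I expect to demand the most care is the overlap bookkeeping: verifying that $x^{e'}_m = 0$ for every edge straddling $e$ and $\overline{e}$ (each monomial must include a factor from a zero-valued vertex), and correctly collapsing the double count $\sum_{v \in e} \#\{e' \ni v\}$ into $\sum_{v \in e}(d(v)-1)$. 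I would also note that the bound pairs an edge with a vertex outside it, so the degenerate case $e = V$ needs separate treatment (or exclusion), and that nested edges can only lower $\mathcal{L}x^m$, so the inequality is preserved in either case.
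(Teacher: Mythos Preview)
Your proof is correct and follows the same approach as the paper: the same test vector supported on a single edge with constant value $s^{-1/m}$, and the same edge-by-edge decomposition of $\mathcal{L}x^m$. If anything you are more careful than the paper, correctly writing $\le$ rather than $=$ to accommodate possible edges $e'\subsetneq e$ (which contribute zero), and flagging the degenerate case $e=V$ that the paper does not address.
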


\begin{proof} 
 Let $e_0 = \{ v_{i_1}, v_{i_2}, \ldots, v_{i_s} \} \in E$. Define a vector $x \in \R_{+}^{n}$ such that 
 \[
 x_i =
  \begin{cases}
   s^{-m} & \text{if } v_i \in e_0\\
   0 & \text{otherwise}.
  \end{cases}
\]
 Then $\displaystyle\sum\limits_{i=1}^{m} x_i^m = s(\frac{1}{s^{1/m}})^m =1$.
 $\mathcal{L}(e_0)x^m = \displaystyle\sum\limits_{i=1}^{m} x_i^m - \frac{s}{\Omega}x^e_m = 0. $
 \begin{align*}
 \alpha (H) &\leq \mathcal{L} x^m \\
	    &= \sum_{e \in E} \mathcal{L} (e)x^m \\
	    &=\left( \sum_{e \in E \backslash \{e_0 \}} \mathcal{L} (e)x^m + \mathcal{L} (e_0 )x^m  \right) \\ 
	    &= (d(v_{i_1})-1 )(1/s) + (d(v_{i_2})-1 )(1/s) + \ldots + (d(v_{i_s})-1 )(1/s) \\
	    &= \frac{d(v_{i_1}) + d(v_{i_2}) + \ldots + d(v_{i_s}) -s }{s}.
 \end{align*} 
\end{proof}

\begin{theorem} 
\label{cheegerfornonunigraph} 
For a non-uniform hypergraph $H$, 
$$ \frac{m}{2}i \geq \alpha(H) \geq \frac{s_{\text{min}}}{m}(\Delta - \sqrt{\Delta^2 -i^2}).$$
\end{theorem}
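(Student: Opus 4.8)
The statement splits into two independent inequalities, which I would treat separately.

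For the upper bound $\alpha(H)\le \frac m2 i$ the plan is to produce a feasible test vector. Let $S$ with $|S|\le n/2$ attain $i=i(H)$, so that $|\partial S|/|S|=i$. I would use two indicator-type vectors: $x$ equal to $|S|^{-1/m}$ on $S$ and $0$ elsewhere, and $x'$ equal to $|\overline S|^{-1/m}$ on $\overline S$ and $0$ on $S$; both are admissible in the definition of $\alpha(H)$ since each vanishes on a nonempty set and has $\sum_i x_i^m=1$. A direct computation shows that every edge lying entirely inside $S$ or entirely inside $\overline S$ contributes $0$ to $\mathcal L x^m$, by the cancellation that makes $\mathcal L(e)x^m=0$ when all coordinates on $e$ are equal (there $x^e_m=\Omega\,c^m$ so $\frac{s}{\Omega}x^e_m=s\,c^m=\sum_j x_{i_j}^m$). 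A boundary edge $e\in\partial S$ with $a_e$ of its vertices in $S$ has a zero factor in every surjective monomial, hence $x^e_m=0$, so it contributes $a_e/|S|$ to $\mathcal L x^m$ and $(|e|-a_e)/|\overline S|$ to $\mathcal L {x'}^m$. Thus $\mathcal L x^m=\frac1{|S|}\sum_{e\in\partial S}a_e$ and $\mathcal L {x'}^m=\frac1{|\overline S|}\sum_{e\in\partial S}(|e|-a_e)$. Since $a_e+(|e|-a_e)=|e|\le m$, at least one of the two sums is $\le \frac m2|\partial S|$, and using $|\overline S|\ge|S|$ in the second case the smaller of the two values is $\le \frac m2\cdot\frac{|\partial S|}{|S|}=\frac m2 i$, so $\alpha(H)\le \frac m2 i$.

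For the lower bound I would let $x$ attain $\alpha(H)$ and set $y_i=x_i^{m/2}$, so $\sum_i y_i^2=1$ and $y_p=0$ for some $p$. First, exactly as in the proof of Theorem \ref{diamforhypergraph} but invoking the antipodal pairing estimate (2) of Lemma \ref{amgm2} on each edge's multiset of $m$ values, I would bound $\mathcal L(e)x^m$ below by $\frac{s_e}{m}$ times a sum of squared differences $(y_u-y_v)^2$ of $y$-values on $e$, then replace $s_e$ by $s_{\text{min}}$ to extract the global factor $\frac{s_{\text{min}}}{m}$. Summing over edges gives $\alpha(H)\ge \frac{s_{\text{min}}}{m}D(y)$ for a hypergraph Dirichlet form $D(y)$. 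Writing $z_i=y_i^2$, the heart is a Cheeger-type lower bound on $D(y)$: a co-area identity $\sum_e\big(\max_{i\in e}z_i-\min_{i\in e}z_i\big)=\int_0^\infty|\partial S_t|\,dt$ over level sets $S_t=\{i:z_i>t\}$ brings in the isoperimetric number through $|\partial S_t|\ge i\,|S_t|$, yielding $\sum_e\big(\max_{i\in e}z_i-\min_{i\in e}z_i\big)\ge i\sum_i z_i=i$. Applying Cauchy--Schwarz in the form $|z_u-z_v|=|y_u-y_v|(y_u+y_v)$ and, crucially, retaining the cross term via $\sum(y_u+y_v)^2=2\sum_u d(v_u)y_u^2-D(y)\le 2\Delta-D(y)$ rather than discarding it, I would obtain the quadratic inequality $i^2\le D(y)\big(2\Delta-D(y)\big)$. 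This forces $D(y)\ge\Delta-\sqrt{\Delta^2-i^2}$, and multiplying by $\frac{s_{\text{min}}}{m}$ gives the claim.

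The main obstacle is this lower-bound step. Retaining the cross term in the Cauchy--Schwarz is exactly what upgrades the easy estimate $\frac{i^2}{2\Delta}$ to the sharp $\Delta-\sqrt{\Delta^2-i^2}$, but it must be reconciled with the hyperedge structure: the range $\max_{i\in e}z_i-\min_{i\in e}z_i$ has to be dominated by the antipodal pairing sum produced in the per-edge step, and the vertex contributions must aggregate to $\Delta$ through the edge-incidences so that $d(v_u)$ (and hence $\Delta$) is the genuine hypergraph degree rather than a clique-expansion degree. The most delicate point is the level-set size constraint $|S_t|\le n/2$ demanded by the definition of $i(H)$: since only one coordinate of $y$ is forced to vanish, a level set may exceed $n/2$, and this is precisely why the antipodal pairing (2) of Lemma \ref{amgm2}, comparing the $j$-th largest value with the $j$-th smallest for $j\le n/2$, is the appropriate tool, as it confines every comparison to a split across the median and thereby keeps the relevant sets within the admissible range.
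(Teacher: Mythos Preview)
Your plan coincides with the paper's proof. For the upper bound you use the same indicator test vectors on $S$ and $\overline S$; the paper packages the computation as $\alpha\le t(S)\,i$ and $\alpha\le t(\overline S)\,i$ and then adds, whereas you observe that one of the two edge-sums is at most $\tfrac m2|\partial S|$, but the content is identical. For the lower bound both you and the paper invoke part~(2) of Lemma~\ref{amgm2} edge by edge to pass to an auxiliary $2$-graph $\widehat H$ built from antipodal pairings, extract the factor $s_{\min}/m$, and then run the standard Cheeger machinery on $M=\sum_{\widehat H}(y_i-y_j)^2$: a level-set (co-area) argument gives $\sum_{\widehat H}|y_i^2-y_j^2|\ge i(H)\sum_i y_i^2$, the identity $\sum_{\widehat H}(y_i+y_j)^2\le 2\Delta-M$ (using $\Delta(\widehat H)\le\Delta(H)$, which holds because the antipodal pairing uses each vertex of an edge at most once), and finally the quadratic $M(2\Delta-M)\ge i^2$.

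The one place you diverge from the paper is your proposed resolution of the size constraint $|S_t|\le n/2$. Your concern is legitimate---the paper, following \cite{li2016analytic}, invokes $|\partial V_k|\ge i(H)\,|V_k|$ without verifying that $|V_k|\le n/2$---but the antipodal pairing does not cure it. Lemma~\ref{amgm2}(2) is applied \emph{within each edge} to a multiset of size $m$; it determines which squared differences enter $D(y)$, not the cardinalities of the global level sets of $y^2$. The ``median'' in that pairing is the edge-local median of $m$ values, not the median vertex of $H$, so it does nothing to keep the Cheeger level sets in the admissible range. If a repair is needed here, it has to come from a separate support-restriction argument (as in the classical graph proof), not from the pairing lemma.
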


\begin{proof}
 Suppose $S \subset V$ gives the isoperimetric number $i$. Let $y=(y_1, \ldots, y_n ) \in \R^n_+$ be the vector 
 defined as follows.
 \[
 y_i =
  \begin{cases}
   \frac{1}{|S|^{1/m}} & \text{if } v_i \in S\\
   0 & \text{otherwise}.
  \end{cases}
\]
Let $t_S(e)= |\{v: v \in e \cap S \}|$ be the number of vertices of $e$ in $S$ and 
$t(S) = \frac{\sum_{e \in \partial S} t_S(e)}{|\partial S|}$. 
\begin{equation}
\label{mbound} 
t(S) + t(\overline{S}) = \frac{\sum_{e \in \partial S} t_S(e) +t_{\overline{S}}(e) }{|\partial S|} \leq m.
\end{equation}
\begin{equation}
\label{sumofal} 
\alpha \leq \mathcal{L}y^m = \left(\sum_{e \in S}+ \sum_{e \in \overline{S}}+ \sum_{e \in \partial S} \right)\mathcal{L}(e)y^m.
\end{equation}
If $e = \{v_1, \ldots, v_s \} \subset \overline{S}$, $\mathcal{L}(e)y^m = \sum y_i^m -\frac{s}{\Omega} y^e_m=0$. 
If $e = \{v_1, \ldots, v_s \} \subset {S}$,
\begin{align*}
 \mathcal{L}(e)y^m &= \sum_{i=1}^{m} y_i^m -\frac{s}{\Omega} y^e_m \\
 &= \frac{s}{|S|} - \frac{s}{\Omega}\sum \frac{1}{|S|}\cdot 1 \\
 &= 0.
\end{align*}
Therefore only edges in the boundary contribute to the sum of \eqref{sumofal}.
\begin{align*}
\alpha & \leq \sum_{e \in \partial S} \sum_{v_i \in e \cap S} y_i^m \\
      & = \sum_{e \in \partial S} \frac{t_S(e)}{|S|} = \frac{1}{|S|}t(S)|\partial S|\\
      & = t(S)\cdot i.
\end{align*}
Similarly we can get $\alpha \leq t(\overline{S})\cdot i$. Adding them we get $2\alpha \leq (t(S)+t(\overline{S}))i$.
Combining with ~\eqref{mbound}, we get $\alpha \leq \frac{mi}{2}$.

To prove the lower bound, suppose $x = (x_1, \ldots, x_n)$ achieves $\alpha$. For each edge $e=\{v_{i_1}, v_{i_2}, \ldots, v_{i_s} \}$ 
assume $x_{i_1} \leq x_{i_2} \leq \ldots \leq x_{i_s} $ by rearranging the vertices. We define a $2$-graph $\widehat{H}$ whose vertex set is same as 
$H$ and edges are such that $ E(\widehat{H}) = \cup_{e \in E(H)} \{v_{i_j} v_{i_{s+1-j}} : j=1, \ldots,\lfloor s/2 \rfloor\}$. Then 

\begin{equation}
\label{alphadef} 
\alpha = \sum_{e = \{v_{i_1}, v_{i_2}, \ldots, v_{i_s} \} } (\displaystyle\sum\limits_{j=1}^{s} x_{i_j}^m 
- \frac{s}{\Omega} \sum_{\substack{k_i \geq 1 \\ \sum k_i =m}} x_{i_1}^{k_1} \ldots x_{i_s}^{k_s} ).
\end{equation}
Consider $k_1$ copies of $x_1^m$, $k_2$ copies of $x_2^m $ and $k_s$ copies of $x_s^m$, where $k_j \geq 1 $ and $\sum k_j =m$. 
Applying Lemma \ref{amgm2} we get 
\begin{equation} 
\label{amgmalt}
 \frac{k_1x_1^m + k_2x_2^m + \ldots k_sx_s^m }{m } - x_1^{k_1}x_2^{k_2}\ldots x_s^{k_s} \\
\geq \frac{1}{m} \sum_{i=1}^{\lfloor m/2\rfloor}  (\sqrt{b_i} - \sqrt{b_{m+1-i}})^2
\end{equation}
where $b_1, \ldots, b_m$ is any permutation of the variables $x_i$. In particular consider the assignment  
$b_1= x_{i_1}^{m}, b_2 = x_{i_2}^{m}, \ldots, b_{\lfloor s/2 \rfloor} = x^{m}_{i_{s /2}}, 
b_{m+1 -\lfloor s/2\rfloor} = x_{i_{s+1 -\lfloor s/2 \rfloor}}^{m}, \ldots,b_{m} = x_{i_s}^{m} $, and 
$ b_{\lfloor s/2 \rfloor +1} , \ldots , b_{m-\lfloor s/2\rfloor}$ are assigned to any of the remaining 
$m-s $ variables. Then 
\begin{align*}
\sum_{i=1}^{\lfloor m/2\rfloor}  (\sqrt{b_i} - \sqrt{b_{m+1-i}})^2 
& \geq \sum_{i=1}^{\lfloor s/2\rfloor}  (\sqrt{b_i} - \sqrt{b_{m+1-i}})^2 \quad (\text{ since }s \leq m) \\
& = \sum_{j=1}^{\lfloor s/2\rfloor} (\sqrt{x^m_{i_j} }  - \sqrt{x^m_{i_{s+1-j}}} )^2
\end{align*}

Using the above inequality with \eqref{amgmalt}, and summing over all possible values of $k_1, \ldots, k_s$ as in 
proof of Theorem \ref{diamforhypergraph} we get
\begin{align*}
\alpha & \geq \sum_{e= \{v_{i_1}, \ldots, v_{i_s} \} \in E(H) } (\frac{s}{m} \sum_{j=1}^{\lfloor s/2\rfloor} (\sqrt{x^m_{i_j} }  - \sqrt{x^m_{i_{s+1-j}}} )^2) \\
& \geq \frac{s_{\text{min}}}{m} \sum_{e= \{v_{i_1}, \ldots, v_{i_s} \} \in E(H) } (\sum_{j=1}^{\lfloor s/2\rfloor} (\sqrt{x^m_{i_j} }  - \sqrt{x^m_{i_{s+1-j}}} )^2) \\
& = \frac{s_{\text{min}} }{m} \sum_{\{v_{i},v_{j} \} \in E(\widehat{H}) } (\sqrt{x^m_{i} }  - \sqrt{x^m_{j}} )^2).
\end{align*}
Proceeding as in proof of $k$-graphs in \cite{li2016analytic}, let $M =\sum_{\{v_{i},v_{j} \} \in E' } (y_i - y_j)^2$ where $E'=E(\widehat{H}) $ and 
$y_i = \sqrt{x_i^m} $. From Cauchy-Schwarz inequality we have 
\begin{equation}
\label{expforM} 
M \geq \frac{(\sum_{\{v_{i},v_{j} \} \in E' }|y_i^2 - y_j^2|)^2}{\sum_{\{v_{i},v_{j} \} \in E' } (y_i + y_j)^2} .
\end{equation}

Let $w_0 (=0) < w_1 < \ldots < w_h $ be the distinct values of $y_i$, for $i=1, \ldots, n $. For $j=0, \ldots, h$, 
let $V_j = \{ v_i \in V : y_i \geq w_j\}$. For each edge $e \in \partial(V_j)$, 
let $\delta_j(e) = \min \{|V_j \cap e|, |\overline{V_j} \cap e|\}$. Let $\delta(V_j) = \min \{ \delta_j(e) : e \in \partial(V_j)\}$ and 
$\delta(H) = \min_{j =0, \ldots, h } \delta(V_j)$.

\begin{align*}
\sum_{\{v_{i},v_{j} \} \in E' }|y_i^2 - y_j^2|
&= \sum_{k=1}^{h}\sum_{\substack{ \{v_{i},v_{j} \} \in E' \\ v_i \in V_k \\ v_j \notin V_k} }(y_i^2 - y_j^2) = 
 \sum_{k=1}^{h}\sum_{\substack{ \{v_{i},v_{j} \} \in E' \\ y_i = w_k \\ y_j = w_l \\ l<k} }(w_k^2 - w_l^2)  \\
&= \sum_{k=1}^{h}\sum_{\substack{ \{v_{i},v_{j} \} \in E' \\ y_i = w_k \\ y_j = w_l \\ l<k} }
(w_k^2 - w_{k-1}^2) + (w_{k-1}^2 - w_{k-2}^2)+ \ldots+ (w_{l+1}^2 - w_l^2) \\
&= \sum_{k=1}^{h} \sum_{\substack{\{v_{i},v_{j} \} \in E' \\ v_i \in V_k } }\sum_{v_j \notin V_k} (w_k^2 - w_{k-1}^2) \\
&\geq \sum_{k=1}^{h} \delta(V_k)|\partial V_k|(w_k^2 - w_{k-1}^2) \\
&\geq \sum_{k=1}^{h} \delta(H)i(H)|V_k|(w_k^2 - w_{k-1}^2) \\
&= \delta(H)i(H) (|V_{h}| (w_{h}^2-w_{h-1}^2) + \ldots +|V_{1}| (w_{1}^2-w_{0}^2)   ) \\
&= \delta(H)i(H) ((|V_{h}| -|V_{h-1}|)w_{h}^2 + \ldots +(|V_{1}| -|V_{2}|)w_{1}^2  ) \\
&= \delta(H)i(H) \sum_{i=1}^{n} y_i^2
\end{align*}

\begin{align*}
\sum_{\{v_{i},v_{j} \} \in E' } (y_i + y_j)^2 
&= 2 \sum_{v_i, v_j \in E'} (y_i^2 + y_j^2) - \sum_{v_i, v_j \in E'}(y_i - y_j)^2 \\
& \leq 2 \sum_{i=1}^{n} d(v_i) y_i^2 - \sum_{v_i, v_j \in E'}(y_i - y_j)^2 \\          
& \leq 2 \Delta (\widehat{H}) \sum_{i=1}^{n} y_i^2 - \sum_{v_i, v_j \in E'}(y_i - y_j)^2 \\
& = 2  \Delta (\widehat{H}) -M \leq  2\Delta (H) -M.
\end{align*}

$$M \geq \frac{\delta(H)^2 i(H)^2}{2\Delta -M} \geq \frac{i(H)^2}{2\Delta -M}, $$
solving which we get $M \geq \Delta - \sqrt{\Delta^2 -i^2}$. Substituting in \eqref{expforM} we get
$$\alpha \geq \frac{s_{\text{min}}}{m}(\Delta - \sqrt{\Delta^2 -i^2}). $$

\end{proof}

\section{Summary}
\label{summary}

In this paper we have built upon the results for analytic connectivity of $k$-graphs given in \cite{li2016analytic} by applying the definitions for general hypergraphs found in 
\cite{banerjee2016spectra}. We have obtained bounds for the analytic connectivity of general hypergraphs with respect to diameter and degree sequence of the hypergraph. We also proved a version of Cheeger's inequality for hypergraphs.


\begin{thebibliography}{10}
\bibitem{banerjee2016spectra}
Anirban Banerjee, Arnab Char, and Bibhash Mondal.
\newblock Spectra of general hypergraphs.
\newblock {\em arXiv preprint:1601.02136}, 2016.

\bibitem{berge1973graphs}
Claude Berge and Edward Minieka.
\newblock {\em Graphs and hypergraphs}, volume~7.
\newblock North-Holland publishing company Amsterdam, 1973.

\bibitem{bu2016spectral}
Changjiang Bu, Jiang Zhou, and Lizhu Sun.
\newblock Spectral properties of general hypergraphs.
\newblock {\em arXiv preprint:1605.05942}, 2016.

\bibitem{cooper2012spectra}
Joshua Cooper and Aaron Dutle.
\newblock Spectra of uniform hypergraphs.
\newblock {\em Linear Algebra and its Applications}, 436(9):3268--3292, 2012.

\bibitem{fiedler1973algebraic}
Miroslav Fiedler.
\newblock Algebraic connectivity of graphs.
\newblock {\em Czechoslovak mathematical journal}, 23(2):298--305, 1973.

\bibitem{li2016analytic}
Wei Li, Joshua Cooper, and An~Chang.
\newblock Analytic connectivity of k-uniform hypergraphs.
\newblock {\em Linear and Multilinear Algebra}, pages 1--13, 2016.

\bibitem{lim2006}
L.H. Lim.
\newblock Singular values and eigenvalues of tensors, a variational approach.
\newblock {\em Proceedings of 1st IEEE international workshop on computational
  advances of multitensor adaptive processing}, pages 129--132, 2005.

\bibitem{qi2005eigenvalues}
Liqun Qi.
\newblock Eigenvalues of a real supersymmetric tensor.
\newblock {\em Journal of Symbolic Computation}, 40(6):1302--1324, 2005.

\bibitem{qi2014hvaluesoflaplacian}
Liqun Qi.
\newblock $H^+$ eigenvalues of laplacian and signless laplacian tensor.
\newblock {\em Communications in Mathematical Sciences}, 12(6):1045--1064,
  2014.
\end{thebibliography}

\end{document}